\documentclass[12pt]{iopart}

\usepackage[english]{babel}
\usepackage[utf8]{inputenc}
\usepackage[colorinlistoftodos, color=green!40, prependcaption]{todonotes}
\usepackage{color}
\usepackage{titlesec}
\usepackage{amsthm}
\usepackage{mathtools}
\usepackage{amsmath}
\usepackage{mathrsfs}
\usepackage{physics} 
\usepackage{xcolor}
\usepackage{graphicx}
\usepackage{soul}
\usepackage{bm}
\usepackage{comment}
\usepackage{xcolor}
\usepackage{MnSymbol}%
\usepackage{wasysym}%% bold math
\usepackage{cite}
\usepackage{tabularx,ragged2e}
\usepackage{tikz-cd}
\usepackage{amscd}
\usepackage{mathrsfs}  % Per \mathscr
\usepackage{float}
\usepackage{tikz,listofitems}
\usepackage{enumitem}

\usepackage{mathbbol}
\usepackage[pdftex, pdftitle={Article}, pdfauthor={Author}]{hyperref} % For hyperlinks in the PDF

\newtheorem{theorem}{Theorem}
\theoremstyle{definition}
\newtheorem{remark}{Remark}
\newtheorem{definition}{Definition}
\newtheorem{example}{Example}

\newtheorem{lemma}{Lemma}
\newtheorem{proposition}{Proposition}

\newtheorem{problem}{Problem}
\newtheorem{weak Nullstellensatz}{weak Nullstellensatz}
\DeclarePairedDelimiter\floor{\lfloor}{\rfloor}

\begin{document}
%\title{On Translation-Invariant Matrix Product States and advances in MPS representations of the $W$-state} 

\article[On TI MPS and advances in MPS representations of the $W$-state]{}{On Translation-Invariant Matrix Product States and advances in MPS representations of the $W$-state}

\author{Petr Klimov,  Richik Sengupta$^1$, and Jacob Biamonte\footnote{Former address}$^1$}
\address{$^1$Skolkovo Institute of Science and Technology, Moscow, Russian Federation}

\ead{peterklimov@yandex.ru, r.sengupta@skoltech.ru, jacob.biamonte@deepquantum.ai}
    
\begin{abstract}
This work is devoted to the study Translation-Invariant (TI) Matrix Product State (MPS) representations of quantum states with periodic boundary conditions (PBC). We pursue two directions: we introduce new methods for constructing TI MPS representations for a certain class of TI states and study their optimality in terms of their bond dimension. We pay particular attention to the  $n$-party $W$-state and construct a TI MPS representation of bond dimension $\floor*{\frac{n}{2}}+1$ for it. We generalize the approach implemented for the $W$-state to obtain TI MPS representations of a larger class of states satisfying several structural conditions. We further study properties of this class and show that we can can always achieve a bond dimension of $n$ for TI MPS representation of states in this class. In the framework of studying optimality of TI MPS representations with PBC, we study the optimal bond dimension $d(\psi)$ for a given state $\psi$. In particular we introduce a deterministic algorithm for the search of $d(\psi)$ for an arbitary state. Using numerical methods, we verify the optimality of our previous construction for the $n$-party $W$-state for small $n$.

\end{abstract}

\section{Introduction}

The concept of MPS (Matrix Product State) is an important concept in quantum theory, especially in quantum information theory.  In recent years, MPS representations have become an essential tool for investigating condensed matter physics, quantum information theory, and quantum field theory. It arose naturally in tensor networks, and is one of the most studied of tensor networks \cite{Gar+06, Fannes92, Klumper92, Silvi2012TensorNA}. It is also sometimes referred to as tensor trains \cite{Ose11}. Moreover, similar concepts are actively used in other areas like classical and quantum machine learning \cite{10.1162/neco_a_01202, Meng2020, Dborin_2022, NIPS2016_5314b967, sengupta2022tensor}. 

In general, for a quantum state $\psi$ (or simply for a vector $\psi \in (\mathbb{C}^2)^{\otimes n}$), the MPS representation is written as
$$
|\psi\rangle=\sum_{i_1,\ldots,i_N=1}^2
\trace\left[A_{i_1}^{[1]}A_{i_2}^{[2]}\cdots
A_{i_N}^{[N]}\right]|i_1,i_2,\ldots,i_N\rangle\;,
$$ where $A_i^{[k]}$ are complex matrices.
This representation is also sometimes called MPS with PBC (periodic boundary conditions) to distinguish it from the analogous  MPS representation with OBC (open boundary conditions).

A convenient and interesting class of MPS representations to consider is TI (translationally invariant) MPS with periodic boundary conditions (PBC)\cite{Gar+06}. This representation can be used when the matrices are site-independent i.e $A_1^{[k]} = A_1$ for all $k$. This type of MPS representation uses significantly lower number of matrices in the representation, even though sometimes at a cost of increase in the dimension of the matrices in the representation often referred to as bond dimension.  It has been proved that any TI state $\psi$ admits a TI MPS representation with PBC and moreover an upper bound on $d(\psi)$  (minimal possible dimension of the matrices in the  TI MPS  representation with PBC for $\psi$) can be obtained that depends on the dimension of the state $\psi$ \cite{Gar+06}.

However, a big unsolved problem is obtaining exact estimates or improving existing estimates of the dimension $d(\psi)$ at least for certain classes of TI states, for example the $W$-state.
In \cite{Gar+06}, a TI MPS representation with PBC  was provided for the $W$-state of order $n$ with bond dimensions of the order $O(n)$ with constant factor $1$. Moreover, in \cite{Gar+06} it was hypothesized that the bond dimension for the MPS representation of the $W$-state is lower bounded by $O(n^{\frac{1}{3}})$. In \cite{ShitovMPS}, a weaker form of this hypothesis was proved which stated that for all $\delta > 0$  the bond dimension of the $W$-state is lower bounded by $O(n^{\frac{1}{3 + \delta}})$.

In \textbf{Theorem \ref{W state representation}} of the current paper we construct a TI MPS representation with PBC of the  $W$-state of dimension $n$ with matrices of size $\floor*{\frac{n}{2}}+1$. This improves the previous result known to the authors on the TI MPS representation with PBC  for the $W$-state with minimum bond dimension.

In \textbf{Theorem \ref{base i-sparse TI MPS construction}} we extend the method previously developed by us in the previous theorem to build TI MPS representation with PBC for a large class of "sparse" states. This generalization provides a framework to search for more optimal TI MPS representations with PBC.

\textbf{Theorem \ref{canonical construction with matrix unit}} we demonstrate that all TI MPS representations with PBC, constructed using the approach outlined in Theorem \ref{base i-sparse TI MPS construction}, can be transformed into a lower-dimensional representation with matrices of size $n\times n$. Furthermore, this new reduced representation can be obtained directly from the original representation using the formulae specified in the theorem.

As previously mentioned  any TI state $\psi$ admits a TI MPS representation with PBC, taking this fact into account we explicitly construct a search algorithm for  $d(\psi)$ for an arbitrary state $\psi$ in \textbf{Theorem \ref{bb}}. The algorithm allows us to deterministically obtain $d(\psi)$ for a TI MPS with PBC representation of a given state. 

The results of our numerical experiments for obtaining  $d(\psi)$ for the $W$-state using the algorithm from Theorem \ref{bb} hints towards the possibility  that the estimate in Theorem \ref{W state representation} could be optimal.

\section{Notation}

Let us formulate the main definitions and objects with which we will work.
The standard basis in the vector space $\mathbb{C}^2$ comprises of vectors $(1, 0)$ and $(0, 1)$. Quantum computation uses the Dirac notation, where they are denoted by $\ket{0}$ and $\ket{1}$, respectively, and they are said to form the \textit{computational basis}.

We consider the $n$ tensor power of the space $\mathbb{C}^2$ denoted by $(\mathbb{C}^2)^{\otimes n}$, whose basis comprises all possible $n$ tensor products of the standard basis vectors of $\mathbb{C}^2$ of the form $\ket{i_1} \otimes \ket{i_2} \ldots \ket{i_n}$, where $(i_1,\ldots,i_n) \in \{0, 1\}^n$. In Dirac notation these products are denoted as $\ket{i_1\ldots i_n}$.  In other words, formal strings of length $n$ comprising of numbers $0$ and $1$ encode the corresponding basis vectors. In total, there are $2^n$ basis vectors in $(\mathbb{C}^2)^{\otimes n}$.

We call an arbitrary vector from $(\mathbb{C}^2)^{\otimes n}$ with a unit norm a \textit{quantum state}. We can interpret vectors with non-unitary norm as non-normalized quantum states. By default, theorems in this paper are suitable both for normalized and non-normalized quantum states. In the case when we intend to underline that the result is specifically for non-normalized or normalized state, we will do so.

Every $ \psi \in (\mathbb{C}^2)^{\otimes n} $ can
  be decomposed in the computational basis with complex coefficients as:
  \begin{align}\label{StateDecomposition}
  \ket{\psi} = \sum\limits_{(i_1,\ldots,i_n) \in \{0, 1\}^n} c_{(i_1,\ldots,i_n)}\ket{i_1i_2\ldots i_n}
  \end{align}
  
  or in a more compressed form
  \begin{align}\label{StateShortDecomposition}
  \ket{\psi} = \sum\limits_{I \in \{0, 1\}^n} c_I\ket{I}.
  \end{align}
 
 Thus, we denote the coefficients of the vector $ \psi \in (\mathbb{C}^2)^{\otimes n} $ as $c_I^{\psi}$ or simply $c_I$, if the state is clear from the context.
 
We call the quantum state $ \psi \in (\mathbb{C}^2)^{\otimes n} $   \textit{translationally invariant} or \textit{TI state}, if the coefficients do not change under cyclic shifts of the basis vectors, i.e.
\begin{align}\label{TIState}
c_{(i_1, i_2, ..., i_n)} = c_{(i_2, i_3, \ldots, i_n, i_1)}\ \forall (i_1,\ldots,i_n) \in \{0, 1\}^n.
\end{align}

Let us denote the set of all TI states from $(\mathbb{C}^2)^{\otimes n}$ as $TI-(\mathbb{C}^2)^{\otimes n}$. 
 %It is easy to see that all TI states form a vector subspace in $ (\mathbb{C}^2)^{\otimes n} $ (if we supplement them with a zero vector $0$, which is not formally a state and drop the unit norm condition).We will call this vector space $TI-(\mathbb{C}^2)^{\otimes n}$.
 
For fixed $\psi \in (\mathbb{C}^2)^{\otimes n}$ its coefficients can be considered as functions $c: \{0, 1\}^n \to \mathbb{C}$ or as a tensor. The latter makes it possible to consider Matrix Product States (MPS) representations of the given state. There are various forms of MPS representations, we will work in this paper with TI MPS representations with Periodic Boundary Conditions (PBC)  of TI states.

Following is a general mathematical formulation of MPS representation with PBC for any $\psi \in (\mathbb{C}^2)^{\otimes n}$. 

\begin{definition}
 MPS representation with PBC for $\psi \in (\mathbb{C}^2)^{\otimes n}$ has the form
 \begin{align}\label{StateShortDecomposition}
 |\psi\rangle=\sum_{i_1,\ldots,i_N=0}^1
 \Trace\left[A_{i_1}^{[1]}A_{i_2}^{[2]}\cdots
 A_{i_N}^{[N]}\right]|i_1,i_2,\ldots,i_N\rangle\;,
 \end{align}
 where $A_{i_k}^{[k]}$ are complex matrices of dimension $d_k \times d_{k + 1}$. 
\end{definition}

When the matrices $A_{i_k}^{[k]} = A_{i_k} $ for all $ k= \overline{1,N}$, they are said to be "site-independent", we call such an MPS representation with PBC, a TI (translationally invariant) MPS representation with PBC. In this case, the dimensions of the matrices $A_{i_k}$ coincide, that is, $d_{i_k} = d$. The dimension $d$ is also called the \textit{bond dimension}. Given that an MPS representation with PBC is uniquely determined by the matrices used in the representation, we can reformulate the definition of TI MPS representation with PBC in the following convenient way.

\begin{definition}
TI MPS representation with PBC of bond dimension $d$ for the TI state $\psi \in (\mathbb{C}^2)^{\otimes n}$ is a pair of $d\times d$ complex matrices $A_0$ and $A_1$, such that
\begin{align}\label{base TI formulation}
\hspace*{-1cm} \Tr (A_{i_1} A_{i_2} \ldots A_{i_n}) = c_{(i_1, \ldots, i_n)} \ \ \ \forall (i_1, \ldots, i_n) \in \{0, 1\}^n.
\end{align}
\end{definition}

Note that by TI MPS with PBC we mean site-independent MPS representations for TI states with periodic boundary conditions. Some authors mention it explicitly, but conventionally it is omitted. 

Due to the cyclicity of the trace, finding the maximal number of unique equations in \ref{base TI formulation} is equivalent to solving the famous necklace problem from combinatorics. Using Polya's enumeration theorem\cite{RedfieldPolya_theorem, Polya_theorem}, the maximal number of unique equations is:

\begin{align}\label{numb}
    \frac {1}{n}\sum _{p|n}\varphi (p)2^{n/p},
\end{align}

where $\phi(p)$ is the Euler totient function.

TI MPS representations with PBC allow us to store information about the coefficients in two matrices and obtain them by taking the traces of the products of these matrices. There are also other forms of expressing coefficients in terms of MPS.

\begin{proposition}\label{lamb}
    Let  $A_0$ and $A_1$ determine the TI MPS representation with PBC for TI state $\psi \in (\mathbb{C}^2)^{\otimes n}$. Then $A_0' = \sqrt[n]{\lambda}A_0$ and $A_1'=\sqrt[n]{\lambda}A_1$ determine TI MPS representation with PBC for $\lambda \psi$ where $\sqrt[n]{\lambda} \in \mathbb{C}$ can be any of the $n$-th roots of $\lambda$.
\end{proposition}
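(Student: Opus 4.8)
The plan is to verify the defining equation \ref{base TI formulation} directly for the pair $(A_0', A_1')$ against the coefficients of the scaled state $\lambda\psi$. First I would record the two facts I need. On the one hand, by hypothesis the pair $(A_0, A_1)$ satisfies
\begin{align}
\Tr\left(A_{i_1} A_{i_2} \cdots A_{i_n}\right) = c_{(i_1,\ldots,i_n)}^{\psi} \qquad \forall (i_1,\ldots,i_n) \in \{0,1\}^n.
\end{align}
On the other hand, scaling a state by $\lambda$ simply scales each coefficient, so from the decomposition \ref{StateDecomposition} we have $c_I^{\lambda\psi} = \lambda\, c_I^{\psi}$ for every multi-index $I$. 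The goal is then to confirm that the left-hand side of \ref{base TI formulation} for the primed matrices reproduces exactly $c_I^{\lambda\psi}$.

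The core computation is to substitute $A_{i_k}' = \sqrt[n]{\lambda}\, A_{i_k}$ into the trace and pull the scalars out. Since matrix multiplication is bilinear and scalars commute past the trace, each of the $n$ factors contributes one copy of $\sqrt[n]{\lambda}$, giving
\begin{align}
\Tr\left(A_{i_1}' A_{i_2}' \cdots A_{i_n}'\right) = \left(\sqrt[n]{\lambda}\right)^{n} \Tr\left(A_{i_1} A_{i_2} \cdots A_{i_n}\right) = \lambda\, c_{(i_1,\ldots,i_n)}^{\psi} = c_{(i_1,\ldots,i_n)}^{\lambda\psi}.
\end{align}
The one point worth stating carefully, rather than a genuine obstacle, is why the choice of root is irrelevant: the product always contains \emph{exactly} $n$ matrix factors because $\psi$ lives on $n$ sites, so the accumulated scalar is precisely $\left(\sqrt[n]{\lambda}\right)^{n}$, which equals $\lambda$ by the very definition of an $n$-th root, no matter which of the $n$ roots is selected. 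This holds identically for every multi-index $(i_1,\ldots,i_n)$, so the single verification above covers all $2^n$ defining equations simultaneously.

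Because the identity is established for all multi-indices, the pair $(A_0', A_1')$ satisfies \ref{base TI formulation} with respect to $\lambda\psi$, which is exactly the assertion that it is a TI MPS representation with PBC of $\lambda\psi$ (of the same bond dimension $d$). I expect no real difficulty here; the proof is a one-line scalar factorization, and the only care needed is the explicit remark on root independence, which I would fold into the display above.
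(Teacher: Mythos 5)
Your proposal is correct and follows the same route as the paper's own proof: substitute the scaled matrices into the trace, pull out the $n$ scalar factors to get $\bigl(\sqrt[n]{\lambda}\bigr)^{n}=\lambda$, and match against the scaled coefficients. The paper's version is a one-line computation; your additional remarks on root independence and on $c_I^{\lambda\psi}=\lambda c_I^{\psi}$ are just explicit spellings of steps the paper leaves implicit.
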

\begin{proof}
From linearity of trace we have

$\Tr (A_{i_1}' A_{i_2}' \ldots A_{i_n}')=\Tr (\sqrt[n]{\lambda}A_{i_1} \sqrt[n]{\lambda}A_{i_2} \ldots \sqrt[n]{\lambda}A_{i_n})=\lambda\Tr (A_{i_1} A_{i_2} \ldots A_{i_n}).$

\end{proof}

This proposition in particular allows one to search for TI MPS representations with PBC for quantum states in non-normalized form and obtain the representations of quantum states in the normalized form essentially for free. Specifically, the matrices for the normalized states can be obtained by multiplying a suitable constant factor to matrices obtained for the non-normalized states.

Some quantum states are important in quantum theory, one such state is the $W$-state\cite{Zhu_2015}. 

\begin{definition}
The $W$-state of order $n$ is defined as 
\begin{align}\ket{W_n} = \frac{1}{\sqrt n}\sum\limits_{(i_1,\ldots,i_n) \in \{0, 1\}^n : \sum\limits_{j=1}^n i_j = 1} \ket{i_1\ldots i_n}
\end{align}
\end{definition}

It is easy to see that the $W$-state is a TI state. The problem of finding a TI MPS representation with PBC for the $W$-state follows from the following definition:

\begin{definition}
TI MPS representation with PBC of bond dimension $d$ for the $W$-state are pair of complex matrices $A_0$ and $A_1$ such that
\begin{align}\label{base formulation normalized}
\Tr (A_{i_1} A_{i_2} \ldots A_{i_n}) =  \begin{cases*}
 0\ \ \  \forall (i_1, \ldots, i_n) \in \{0, 1\}^n \text{ such that } \sum\limits_{j=1}^n i_j \ne 1\\
 \dfrac{1}{\sqrt n} \ \ \  \forall (i_1, \ldots, i_n) \in \{0, 1\}^n \text{ such that } \sum\limits_{j=1}^n i_j = 1
\end{cases*}
\end{align}
\end{definition}

Proposition \ref{lamb} allows us to search for TI MPS representations with PBC for a state multiplied by some constant instead of the original. 

\begin{remark}\label{1}
We can find TI MPS representation with PBC of bond dimension $d$ for non-normalized (multiplied by $\sqrt{n}c$ where $c  \in \mathbb{C} / \{0\}$) $W$-state of order $n$ using complex matrices $A_0$ and $A_1$ such that
\begin{align}\label{base formulation}
\Tr (A_{i_1} A_{i_2} \ldots A_{i_n}) =  \begin{cases*}
 0\ \ \  \forall (i_1, \ldots, i_n) \in \{0, 1\}^n \text{ such that } \sum\limits_{j=1}^n i_j \ne 1\\
 c \ \ \  \forall (i_1, \ldots, i_n) \in \{0, 1\}^n \text{ such that } \sum\limits_{j=1}^n i_j = 1
\end{cases*}
\end{align}
From our previous discussion, it is easy to see that we can obtain a TI MPS representation with PBC for the (normalized) $W$-state using $A_0' = \frac{1}{\sqrt[n]{c\sqrt{n}}} A_0$, $A_1' = \frac{1}{\sqrt[n]{c\sqrt{n}}} A_1$ with $c = Tr(A_0^{n - 1} A_1)$.
\end{remark}

\begin{comment}
  \textcolor{red}{Let's introduce the definition of the minimum dimensionality of the matrix in the TI MPS.}  
\end{comment}

\begin{definition}
We call $d(\psi)$ the minimal bond dimension $d$ such that there exists TI MPS representation with PBC for the TI state $\psi$.
\end{definition}

From \cite{Gar+06} it follows that $d(\psi)$ is well defined (since TI MPS representation with PBC exists for every TI state). If we have some state that is defined for each $n$ (such as the $W$- state) then we can consider the function $d(\psi(n))$ which for every fixed $n$ is equal to the minimum dimension $d$, such that for corresponding state of order $n$ there exists a TI MPS representation with PBC  of bond dimension $d$.  
In the general, the problem of determining $d(\psi)$  for specific states and its asymptotics, as well as the construction of  matrices  $A_0$ and $A_1$ on which the best scaling is achieved is a difficult problem.

In paper \cite{Gar+06}, it was shown that for the $W$-state $d(n) = O(n)$ with constant factor $1$. In \cite{ShitovMPS} it was proved that $\forall \delta > 0, \ d(n) = \Omega(n^{\frac{1}{3 + \delta}})$. At the same time, the question regarding the exact asymptotics of $d(\psi(n))$ and the constant factor remains open.

%In our paper, we will offer a number of methods for constructing TI-MPS. We will also discuss the possibility of finding $d(\psi)$ and $d(n)$ for series of states and apply these methods for the $W$-state, constructing several new representations for it and improving the constant factor for $d(n)$.

%In the paper (Cirac) it was shown that $d(n) \in O(n)$ with proportionality constant $1$. At the same time, the question about the exact asymptotics of $d(n)$ and the best proportionality constant remains open.
%In (Cirac) it was proven that under the condition that the hypothesis (???) holds the lower estimate $O(n^{\frac{1}{3}})$ is true for $d(n)$. In our paper, we will discuss results for certain classes of matrices, discuss methods for solving the general and similar problems, and improve the proportionality constant for $d(n)$

% If such a pair $(d,n)$ satisfying the above equations exists then we say that it is an "admissible" pair which means that the W-state can be represented an MPS.

%Our aim is to find $d$ that "scales slowly" relative to $n.$

%At first we will take a look at several examples of admissible pairs. Then we will prove lemma leading up to the impossibility of the representation of the $W$ for the pair $(2,4).$ We will then build more sophisticated general techniques for proving impossibility for an arbitrary $(d,n)$ which we will demonstrate on an example for the pair $(2,5).$ We will finish with two constructions that scale as $O(n).$

\section{Approaches to building a TI MPS}

In \cite{GV03}  it was shown that for any  quantum state $\psi \in (\mathbb{C}^2)^{\otimes n}$ we can construct an MPS representation with PBC. Moreover, in \cite{Gar+06} it was shown that for any state it is possible to construct an MPS with Open Boundary Conditions (OBC). Using Theorem 3 from \cite{Gar+06} which connects TI MPS with PBC to MPS with OBC, it follows that we can construct TI MPS representation with PBC for any TI state.

However, using general techniques, we find that the MPS representation scales rapidly with the bond dimension of the matrices in the representation. For example, for MPS with OBC the dimension of the constructed matrices $d(\psi) = O(2^{\frac{n}{2}})$, and for TI MPS with PBC $d(\psi) = O(n 2^{\frac{n}{2}})$.

As far as is known to the authors of the article, there is no previously explored general way to find an exact estimate for the asymptotics of $d(\psi)$ for a given state $\psi \in (\mathbb{C}^2)^{\otimes n}$ or a method for constructing an MPS representation with the dimensions of the matrices that will not grow too fast and whose dimensions will be closer to the theoretical $d(\psi)$. Moreover, the way to build TI MPS with PBC using MPS with OBC from \cite{Gar+06} has the dimensions of the matrices in the representation at best $O(n)$ with constant factor $1$. In principle, we cannot construct representations with better asymptotics or constant factor this way.

If we want to find better TI MPS representation with PBC in terms of the bond dimension for a particular state, we need to look for a solution to the system of equations \ref{base TI formulation}. In this system, there is an exponential number of equations that depend in a non-trivial way on taking the trace of various combinations of products of matrices, which in the general case is quite difficult to solve. Therefore, methods for finding optimal TI MPS with PBC are very important, and it is useful to have ways of constructing MPS at least for some classes of states.

Returning to the $W$-state, the predominant part of coefficients in the basis decomposition is 0, which motivates to use for example a nilpotent matrix as one of the matrices so that a large number of coefficients computed as $\Tr (A_{i_1} A_{i_2} \ldots A_{i_n}) $ vanish. As we will show below we can pick the matrix unit $E_{i,j}$ (a matrix whose $(i,j)-$th entry is one and the rest of the entries are zero)  as one of the matrices and still obtain a considerably good solution.

Based on the Remark \ref{1}, it is easy to see that constructing a TI MPS representation with PBC  for some non-normalized state allows one to construct a TI MPS representation with PBC of the same bond dimension with the original matrices, normalized by some constant. 

Below we construct a TI MPS representation with PBC for the $W$-state using the idea of considering a matrix unit as one of the matrices. In order to do that we first prove the following lemma:

\begin{lemma}[Matrix unit lemma]\label{mul}

Let $A$ be a $d\times d$ matrix with elements from an arbitrary field, $r \in \mathbb{N}$. Then 
\begin{align}
        E_{jk} A^r E_{jk} = (A^r)_{kj}E_{jk},
\end{align}
 where $(A^r)_{kj}$ is the $(k,j)-$th entry of the matrix $A^r.$
\end{lemma}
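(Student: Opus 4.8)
The plan is to reduce everything to the fundamental multiplication rule for matrix units, $E_{ab}E_{cd} = \delta_{bc}E_{ad}$, and then expand $A^r$ in the matrix-unit basis. There is no genuine difficulty here: the statement is essentially an exercise in index bookkeeping, so my main concern is to organize the computation so that the two Kronecker deltas collapse cleanly and only a single term survives.

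First I would record the basic identity $E_{ab}E_{cd} = \delta_{bc}E_{ad}$, which follows immediately from the definition $(E_{ab})_{st} = \delta_{as}\delta_{bt}$ by computing the $(s,t)$ entry of the product. Next I would write $A^r = \sum_{p,q}(A^r)_{pq}E_{pq}$, the decomposition of any matrix in the matrix-unit basis. Substituting this into $E_{jk}A^r E_{jk}$ and using bilinearity of matrix multiplication reduces the problem to evaluating the sandwich $E_{jk}E_{pq}E_{jk}$ for each pair $(p,q)$. Applying the multiplication rule twice gives $E_{jk}E_{pq}E_{jk} = \delta_{kp}E_{jq}E_{jk} = \delta_{kp}\delta_{qj}E_{jk}$, so that only the term with $p=k$ and $q=j$ is nonzero in the double sum. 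Collecting that term yields $(A^r)_{kj}E_{jk}$, which is exactly the claim.

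The only subtlety worth flagging is that the result must hold over an arbitrary field, so I would make sure the argument uses nothing beyond the ring axioms; in particular the computation never divides, and the matrix-unit identities hold over any commutative ring with unit, so the proof goes through verbatim. An equivalent route is to bypass the basis expansion and instead establish the general sandwich identity $E_{jk}M E_{lm} = M_{kl}E_{jm}$ for an arbitrary matrix $M$ by a direct entrywise calculation, and then specialize to $M = A^r$, $l=j$, $m=k$; either version is essentially immediate.
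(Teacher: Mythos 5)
Your proof is correct and follows essentially the same route as the paper's: expand $A^r = \sum_{p,q}(A^r)_{pq}E_{pq}$ in the matrix-unit basis and collapse the double sum using the rule $E_{ab}E_{cd}=\delta_{bc}E_{ad}$, leaving only the $(p,q)=(k,j)$ term. The paper performs the two multiplications sequentially rather than via the sandwich identity $E_{jk}E_{pq}E_{jk}=\delta_{kp}\delta_{qj}E_{jk}$, but this is the same computation in a different order.
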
  

\begin{proof}
 \begin{align*}
    E_{jk}A^rE_{jk}= E_{jk}(\sum_{i,l} (A^r)_{il}E_{il})E_{jk}=E_{jk}(\sum_{i} (A^r)_{ij}E_{ik})=(A^r)_{kj}E_{jk}.
\end{align*}   
\end{proof}

Now we are ready to formulate the theorem which provides a TI MPS representation with PBC for the $W$-state with lower bond dimension than ones known to the authors so far.

\begin{theorem}\label{W state representation}
For arbitrary $n \in \mathbb{N}$ we can have the following TI MPS representation with PBC for the non-normalized $W$-state of order $n$ using $(\floor*{\frac{n}{2}}+1) \times (\floor*{\frac{n}{2}}+1)$ matrices:

    $$
    A_{0} =\displaystyle \begin{pmatrix}
1 & 1 & 1 & \dotsc  & $x(n)$\\
1 & 0 & 0 & \ddots  & 0\\
0 & 1 & 0 & \ddots  & 0\\
\vdots  & \ddots  & \ddots  & \ddots  & 0\\
0 & 0 & \ldots & 1 & 0
\end{pmatrix},\quad A_{1} =\displaystyle \begin{pmatrix}
0 & 0 & 0 & \dotsc  & 1\\
0 & 0 & 0 & \ddots  & 0\\
0 & 0 & 0 & \ddots  & 0\\
\vdots  & \ddots  & \ddots  & \ddots  & 0\\
0 & 0 & 0 & \dotsc  & 0
\end{pmatrix},
$$

 where $x(n) \in \mathbb{C}$ is one of the roots of the equation $\text{Tr}~(A_0^n) = 0$. 

 Furthermore, matrices $A_0' = \dfrac{2^{-\frac{n - \floor*{\frac{n}{2}} - 2}{n}}}{\sqrt[2n]{n}} A_0$, $A_1' = \dfrac{2^{-\frac{n - \floor*{\frac{n}{2}} - 2}{n}}}{\sqrt[2n]{n}} A_1$ determine a TI MPS representation with PBC for the normalized $W$-state of order $n$

\end{theorem}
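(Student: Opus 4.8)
The plan is to read the two matrices off structurally and then verify the defining trace equations \eqref{base formulation} grouped by the Hamming weight $k=\sum_{j}i_j$ of the index string. First I would record that $A_1$ is exactly the matrix unit $E_{1d}$ with $d=\floor*{\tfrac n2}+1$, and that $A_0$ acts on the standard basis by $A_0 e_j = e_1+e_{j+1}$ for $j<d$ and $A_0 e_d = x(n)\,e_1$. Consequently $(A_0^{r})_{d1}=(A_0^r e_1)_d$ is the weighted count of length-$r$ walks from vertex $1$ to vertex $d$ in the associated directed graph, where at each vertex one may either ``climb'' ($j\to j+1$) or ``reset'' ($j\to 1$, the reset from vertex $d$ carrying weight $x(n)$). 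The structural fact I would lean on throughout is that reaching row $d$ costs at least $d-1=\floor*{\tfrac n2}$ steps, so $(A_0^{r})_{d1}=0$ whenever $r<\floor*{\tfrac n2}$, irrespective of the value of $x(n)$.

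For $k\ge 2$ I would use cyclicity of the trace to write $\Tr(A_{i_1}\cdots A_{i_n})=\Tr(A_0^{r_1}E_{1d}A_0^{r_2}E_{1d}\cdots A_0^{r_k}E_{1d})$ with $r_1+\cdots+r_k=n-k$ and each $r_i\ge 0$. Writing $E_{1d}=e_1e_d^{T}$ and applying Lemma \ref{mul} (equivalently $E_{1d}ME_{1d}=M_{d1}E_{1d}$) repeatedly collapses this to $\Tr(A_{i_1}\cdots A_{i_n})=\prod_{i=1}^{k}(A_0^{r_i})_{d1}$, so it suffices to exhibit one vanishing factor. Since $d=\floor*{\tfrac n2}+1$ forces $2d>n$, the parts cannot all satisfy $r_i\ge d-1$: otherwise $n-k=\sum_i r_i\ge k(d-1)$ would give $n\ge kd\ge 2d>n$. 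Hence some $r_i<d-1$, the corresponding factor is $0$ by the reachability observation above, and the whole trace vanishes, matching the required coefficient $0$.

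For $k=1$ the product is a cyclic shift of $A_0^{n-1}E_{1d}$, so $\Tr(A_{i_1}\cdots A_{i_n})=(A_0^{n-1})_{d1}=:c$ for every weight-one string, which already secures translation invariance. I would then argue that a length-$(n-1)$ walk from $1$ to $d$ has no room to visit $d$ twice (a second ascent would need $d$ further steps, impossible since $2d>n$), so it never traverses the only $x(n)$-weighted edge $d\to 1$; thus $c$ is independent of $x(n)$. Decomposing the walk into excursions that climb to some height and reset, followed by a final ascent of length $d-1$, identifies $c$ with the number of compositions of $n-d=\lceil\tfrac n2\rceil-1$ into positive parts (the part-size bound $\le d-1$ being automatic since each part is at most $n-d\le\floor*{\tfrac n2}=d-1$), giving $c=2^{\,n-\floor*{\tfrac n2}-2}$ for $n\ge 3$. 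For $k=0$ the product is $A_0^n$ and the $W$-state demands $\Tr(A_0^n)=0$, which is precisely the equation defining $x(n)$; I would justify the existence of a complex root by showing $\Tr(A_0^n)$ is an honest degree-one polynomial in $x(n)$: once more $2d>n$ caps the number of traversals of $d\to 1$ at one, and the closed walk $1\to 2\to\cdots\to d\to 1\to 1\to\cdots\to 1$ (using the self-loop $(A_0)_{11}=1$) shows the coefficient of $x(n)$ is nonzero.

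Finally, the normalization claim follows by substituting $c=\Tr(A_0^{n-1}A_1)=2^{\,n-\floor*{\tfrac n2}-2}$ into Remark \ref{1}: the scalar $\tfrac1{\sqrt[n]{c\sqrt n}}$ simplifies to $\tfrac{2^{-(n-\floor*{\tfrac n2}-2)/n}}{\sqrt[2n]{n}}$, which is the stated factor. The main obstacle is the combinatorial core of the weight-one case, namely proving both that the entry $(A_0^{n-1})_{d1}$ is independent of $x(n)$ and that the composition count evaluates to exactly $2^{\,n-\floor*{\tfrac n2}-2}$, together with the single uniform inequality $2d>n$ that simultaneously annihilates every higher-weight trace and guarantees that $x(n)$ exists; the remaining steps are routine.
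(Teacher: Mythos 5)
Your proof is correct, and while it follows the same overall skeleton as the paper — split the trace equations by Hamming weight, use the rank-one structure of $A_1=E_{1d}$ (Lemma \ref{mul}) to collapse every trace into a product of corner entries $(A_0^{r})_{d1}$, and let the single inequality $2d>n$ do all the killing — your verification of the key facts runs along a genuinely different route. Where the paper establishes the values of $(A_0^m)_{d1}$ algebraically, via the ``descending row'' and ``column summation'' properties culminating in the table \eqref{d1}, you reinterpret $A_0$ as the adjacency matrix of a digraph with climb edges $j\to j+1$, unit resets $j\to 1$, and one $x$-weighted edge $d\to 1$, so that $(A_0^{r})_{d1}$ becomes a weighted walk count; the weight-one trace is then the number of compositions of $n-d$, giving $2^{\,n-\floor*{\frac{n}{2}}-2}$ in one combinatorial identity, and independence from $x(n)$ follows because a walk of length $n-1$ ending at $d$ cannot afford to traverse the $x$-edge. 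Your treatment also improves on the paper in two places: you merge the paper's separate ``consecutive $A_1$'' and ``sparse $A_1$'' cases into a single weight-$\ge 2$ case via the pigeonhole bound $n-k=\sum_i r_i\ge k(d-1)\Rightarrow n\ge kd\ge 2d>n$ (with $(A_0^{0})_{d1}=0$ absorbing adjacency for free), and your existence argument for $x(n)$ — that $\Tr(A_0^n)$ is \emph{affine} in $x$ with strictly positive linear coefficient, again because $2d>n$ caps $x$-edge traversals at one — is tighter than the paper's appeal to ``no negative terms'' plus algebraic closedness, since it excludes the degenerate possibility of a nonzero constant polynomial and even shows $x(n)$ is a negative rational. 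One small point in your favor that you should keep: your restriction of the evaluation $c=2^{\,n-\floor*{\frac{n}{2}}-2}$ to $n\ge 3$ is not pedantry; at $n=2$ one has $c=(A_0)_{21}=1\ne 2^{-1}$, so the normalization constant stated in the theorem (and implicitly in the paper's proof) actually fails there, while your composition-count argument makes the threshold explicit.
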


\begin{proof}
Here, $A_0 =(\sum\limits_{k=1}^{d - 1} E_{1,k} + E_{k + 1, k}) + x E_{1d}, A_1 = E_{1d},$ where $ d= \floor*{\frac{n}{2}}+1.$

Without loss of generality, from \ref{base formulation} it follows that we have to verify that the following holds: 

\begin{equation*}
\Tr (A_{i_1} A_{i_2} \ldots A_{i_n}) =  \begin{cases*}
 0\ \ \  \forall (i_1, \ldots, i_n) \in \{0, 1\}^n \text{ such that } \sum\limits_{j=1}^n i_j \ne 1\\
 const \ \ \  \forall (i_1, \ldots, i_n) \in \{0, 1\}^n \text{ such that } \sum\limits_{j=1}^n i_j = 1
\end{cases*}
\end{equation*}

and prove that $const = 2^{n - \floor*{\frac{n}{2}} - 2}$.

We consider the following cases based on the number and positions of $A_1$ in the products:
\begin{enumerate}
    
    \item {(\bf single $A_1$)} $I \in \{0, 1\}^n \text{ such that }  \sum\limits_{j=1}^n i_j = 1.$ 
    
     Due to the cyclicity of the trace, our expression will have the form $$ \Tr (A_0^{n - 1} A_1) $$ 
     
    Further, $$\Tr (A_0^{n - 1} A_1) =\Tr (A_0^{n - 1} E_{1d})= (A_0^{n - 1})_{d1}.$$

\begin{comment}
\begin{equation}\label{n1}
 (A_0^{k+1})_{ dl}=(A_0^{k})_{ d-1l}=(A_0)_{ d-kl} \quad \text{for} \quad k=\overline{1,d-1}.   
\end{equation}
\end{comment}

 In order to prove the needed result, we need to consider the following two properties of our matrices:

{ \bf Descending Row property:}
For $i\geq2,$ since $(A_0)_{ij}=\delta_{ij+1}$ it follows that:
\begin{align}\label{n1}
 (A_0^{k+1})_{il}=\sum_j(A_0)_{ ij}(A_0^{k})_{j l}=(A_0^{k})_{ i-1l}\ \ \text{for}\ \ \forall k \in \mathbb{N},\ i=\overline{1,k+1}.   
\end{align}

%=(A_0^{k - i + 2})_{ 1l} 

Essentially, the rows of the matrix $A_0$ "descend" upon taking subsequent powers.

{ \bf Column summation property:}
Let $(C_1,C_2,\ldots,C_d)$ be the columns of a matrix $B.$
It can be directly checked that the columns of $BA_0$ are $(C_1+C_2,C_1+C_3,C_1+C_4,\ldots,C_1+C_D,xC_1).$

From  the column summation property it follows that upon taking subsequent powers of $A_0$ the variable $x$ "migrates" to the position (1,1) in $d-1$ steps (multiplications) and from the descending row property it follows that  $x$ migrates to the position $(d,1)$ in another $d-1$ steps. Prior to taking $2d-2$ steps all entries in the $(d,1)$ position are constants. 
\begin{comment}
for any matrix $B= \sum\limits_{i,j=1}^{n} B_{ij}E_{i,j} $ we have,

$E_{k+1,k} (\sum\limits_{i,j=1}^{n} B_{ij}E_{i,j}) = \sum\limits_{j=1}^{n} B_{kj}E_{k+1,j}. $

Thus, the $k+1-$th row of $E_{k+1,k}B$ coincides with $k-$th row of  $B.$ 
\end{comment}

     It can be summarised as
     \begin{align}\label{d1}
        ( A_0^{m})_{d1}=  \begin{cases*}
 0, \ m=\overline{1,d-2} \\
 1, \ m=\overline{d-1,d} \\
 2^k, \ m= d+k, k =\overline{1,d-2}
 \end{cases*}  
      \end{align}
      Taking $m=n-1,$ we conclude that the necessary $d$ is such that $n-1 \leq 2d-2$ i.e. $\frac{n+1}{2}\leq d.$ Taking $d = \floor*{\frac{n}{2}}+1$, then $const = \Tr(A_0^{n - 1} A_1) = (A_0^{n - 1})_{d1} = 2^{n - \floor*{\frac{n}{2}} - 2}$.

    \item {(\bf no $A_1$)}  $I = \{0\}^n. $  
    
    From the column summation property it directly follows that $( A_0^{m+1})_{dd}=x(n)( A_0^{m})_{d1}.$
    
    \begin{align}
        ( A_0^{m})_{dd}=  \begin{cases*}
 0, \ m=\overline{1,d-2} \\
 x(n), \ m=\overline{d-1,d} \\
 2^{k}x(n), \ m= d+k, k =\overline{1,d-2}
 \end{cases*}  
    \end{align}
    From the above it follows that if $\frac{n+1}{2}\leq d, \Tr (A_{0} A_{0} \ldots A_{0})$ is either $0$ or contains at least one term $2^lx(n).$ Owing to the column summation property, no negative terms arise in the equation $\Tr (A_{0} A_{0} \ldots A_{0})=0$. Since, we are working over the field of complex numbers, the equation always has a solution. This solution determines the value of $x(n).$
    
        \item {\bf (consecutive $A_1$)}  $I \in \{0,1\}^n \text{ such that } \exists k : i_k=i_{k+1}=1$ or  $i_1+i_n=2.$ 
    
    From the nilpotency of $A_1,$ the trace in this case will trivial be $0.$
    
    \item {\bf (sparse $A_1$)} $I \in \{0,1\}^n \text{ such that } \not \exists k : i_k=i_{k+1}=1, i_1+i_n \neq 2$ and $\sum\limits_{i_k:i_k=1}i_k\geq 2.$ 
    
    Owing to the cyclicity of the trace, the product has the form $A_1 A_0^{r_1}A_1 A_0^{r_2} \ldots A_1 A_0^{r_k}$ where $\sum\limits_{k}r_k \leq n-2.$ Since, $A_1=E_{1d}$ from the matrix unit lemma \ref{mul}, it follows that
    \begin{align}\label{sptr}
       Tr(A_1 A_0^{r_1}A_1 A_0^{r_2} \ldots A_1 A_0^{r_k}) = (A_0^{r_1})_{d1}(A_0^{r_2})_{d1}\ldots (A_0^{r_k})_{d1}  
    \end{align}
   
       As the sum of the powers of the matrices in the expression $Tr(A_1 A_0^{r_1}A_1 A_0^{r_2} \ldots A_1 A_0^{r_k})$ is $n$ and the number of expressions of the form $A_0^{r_j}$ is at least 2, we have $r_{min}=\min\{r_1,r_2,...r_k\}<\floor*{\frac{n}{2}}.$ But from \eqref{d1} it follows that $(A_0^{r_{\min}})_{d1}=0$ if $\floor*{\frac{n}{2}}<d-1.$ Hence,
        $$
    Tr(A_1 A_0^{r_1}A_1 A_0^{r_2} \ldots A_1 A_0^{r_k}) = (A_0^{r_1})_{d1}(A_0^{r_2})_{d1}\ldots (A_0^{r_k})_{d1} =0.
    $$
\end{enumerate}
\end{proof}

In this way, we obtain an analogous result in asymptotics $d(n) = O(n)$ for TI MPS representation with PBC for the $W$-state but with a better constant factor $\dfrac{1}{2}$. In Section 4, we discuss the results of numerical experiments with small $n$ for determining the minimum possible bond dimension $d(\psi)$ for the $W$-state. These experiments lead us to conjecture that the representation obtained in Theorem \ref{W state representation} could be optimal both in terms of asymptotics and the constant factor or in other words it is impossible to come up with a TI MPS representation with PBC with bond dimension smaller than $\floor*{\frac{n}{2}}+1$ for the $W$-state of order $n$.

We will generalize our approach, which was used in the last theorem. As we see, sufficiently good estimates were obtained by using a sparse nilpotent matrix, i.e. a matrix unit as one of the matrices. Let us call \textit{TI MPS representation with PBC with a scaled matrix unit} such a TI MPS representation  with PBC in which one of the matrices is proportional to some matrix unit $E_{j, k},\ j \ne k$. In the case where $i = j$ the matrix will not be nilpotent, which makes this case potentially less suitable for generating interesting solutions. The formal use of a matrix proportional to the matrix unit is natural because quantum states are often considered up to normalization. It turns out that in such a situation it is possible to formulate a structural theorem that makes it easier to find TI MPS representation with PBC or at least to check whether the given matrices can be used to build such a representation.

\begin{theorem}[Necessary and sufficient condition for TI MPS with PBC representation with a scaled matrix unit]\label{base i-sparse TI MPS construction}

Let $A_1 = \lambda E_{jk}$ and $ \lambda \in \mathbb{C} / \{0\},\ j \ne k.$  Then the matrix $A_{0}$ will form a TI MPS representation with PBC along with $A_1$ for a TI state $\psi \in (\mathbb{C}^2)^{\otimes n}$ if and only if the matrix $A_{0}$ and the state $\psi$ satisfy the following two conditions :

1) $c_I=0$, if in the tuple $I \in \{0,1\}^n$ two consecutive $1$' s (upto cyclicity) appear.

2) the following system of equations is satisfied:
\begin{equation}\label{2eq}
\begin{cases*}
\Tr A_{0}^n = c_{I} \ \text{for} \  I = (0,0,\ldots 0) \\
\prod\limits_{m=1}^l (A_{0}^{p_m})_{k,j} = \lambda^l c_I\ \text{for all }I  = (1, \underbrace{0, \ldots, 0}_{p_1}, 1, \underbrace{0, \ldots, 0}_{p_2}, \ldots, 1, \underbrace{0, \ldots, 0}_{p_l}),
\end{cases*}
\end{equation}

\begin{equation*}
\text{where} \ p_m > 0,\ l > 0 \ \text{and} \ \sum p_m = n - l.
\end{equation*}

\end{theorem}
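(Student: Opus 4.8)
The plan is to prove both directions of the biconditional by reducing the computation of $\Tr(A_{i_1}\cdots A_{i_n})$ to a product of matrix entries, exploiting the structure $A_1 = \lambda E_{jk}$ with $j \neq k$. The central computational tool will be the Matrix unit lemma \ref{mul}: since $E_{jk} A_0^r E_{jk} = (A_0^r)_{kj} E_{jk}$, any product of matrices that contains two or more copies of $A_1$ separated by powers of $A_0$ collapses into a scalar multiple of $E_{jk}$, whose trace is zero because $j \neq k$ places the single nonzero entry off the diagonal. This immediately handles the bulk of the coefficient equations and is what forces the two structural conditions.

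First I would partition the index tuples $I \in \{0,1\}^n$ by the number and cyclic arrangement of the $1$'s, mirroring the case analysis of Theorem \ref{W state representation}. For a tuple containing two cyclically consecutive $1$'s, the product $A_{i_1}\cdots A_{i_n}$ contains the factor $A_1 A_1 = \lambda^2 E_{jk}E_{jk} = 0$ (again using $j \neq k$), so the trace vanishes; since the coefficient must match, this is precisely where condition (1) arises, giving $c_I = 0$. For a tuple with $l \geq 1$ ones that are cyclically non-adjacent, cyclicity of the trace lets me write the product in the canonical form $A_1 A_0^{p_1} A_1 A_0^{p_2}\cdots A_1 A_0^{p_l}$ with each $p_m > 0$ and $\sum p_m = n - l$. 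Applying Lemma \ref{mul} repeatedly peels off each $E_{jk}$, yielding $\Tr(\cdots) = \lambda^l \prod_{m=1}^l (A_0^{p_m})_{kj}$, which must equal $c_I$; this is exactly the second line of system \eqref{2eq}. The all-zeros tuple gives $\Tr(A_0^n) = c_{(0,\ldots,0)}$, the first equation. The case $l=1$ (a single $1$) is subsumed into the second family with the understanding that the empty product convention or the direct computation $\lambda (A_0^{n-1})_{kj}$ applies.

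For the forward direction I would argue that if $A_0, A_1$ genuinely form a TI MPS representation, then matching traces to coefficients in each case above yields both conditions; the consecutive-ones case produces (1) and the remaining cases produce (2). For the converse I would run the same computation in reverse: assuming (1) and (2), I verify that for every $I$ the trace $\Tr(A_{i_1}\cdots A_{i_n})$ equals $c_I$, checking that the three trace outcomes (zero for consecutive ones, $\lambda^l \prod (A_0^{p_m})_{kj}$ for sparse ones, $\Tr(A_0^n)$ for all zeros) exhaust all tuples and that each matches $c_I$ by hypothesis. I should also confirm consistency under cyclic relabeling, using that $\psi$ is TI so that $c_I$ is constant on cyclic orbits while the product-of-entries expression is manifestly invariant under cyclic rotation of the block sizes $(p_1,\ldots,p_l)$.

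The main obstacle I anticipate is bookkeeping the reduction cleanly rather than any deep difficulty: I must make precise how repeated application of Lemma \ref{mul} telescopes a product with $l$ interspersed copies of $E_{jk}$ down to the single scalar $\prod_{m=1}^l (A_0^{p_m})_{kj}$, being careful that the lemma as stated handles one sandwiched block $E_{jk} A_0^r E_{jk}$ and that iterating it requires tracking the accumulating scalar factors and the surviving $E_{jk}$ at the cyclic seam. I would state explicitly the induction on $l$ that formalizes this telescoping, and verify the edge behavior when some $p_m$ could in principle be zero (excluded by the non-adjacency hypothesis, which guarantees every $p_m > 0$). Ensuring the case split is genuinely exhaustive and mutually exclusive over all $2^n$ tuples is the part that most needs explicit care.
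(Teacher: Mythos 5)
Your proposal is correct and takes essentially the same approach as the paper's proof: the same three-way case split (all-zeros tuple, cyclically consecutive ones, sparse ones), the same use of the matrix unit lemma to telescope $\Tr(A_1 A_0^{p_1} A_1 A_0^{p_2} \cdots A_1 A_0^{p_l})$ into $\lambda^l \prod_{m=1}^{l} (A_0^{p_m})_{kj}$, and the same matching of traces to coefficients to get both directions of the equivalence. Your added care about the induction on $l$ and the consistency of the equations under cyclic relabeling only makes explicit what the paper treats as immediate.
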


\begin{proof}

As usual, for $A_0$ and $A_1$ to be a TI MPS representation with PBC of the state state $\psi \in (\mathbb{C}^2)^{\otimes n}$ it is necessary and sufficient for each of the following equations to hold:
\begin{align}\label{base2}
\hspace*{-1cm} \Tr (A_{i_1} A_{i_2} \ldots A_{i_n}) = c_{(i_1, \ldots, i_n)} \ \ \ \forall (i_1, \ldots, i_n) \in \{0, 1\}^n.
\end{align}
Let us divide the system of equations \eqref{base2} into 3 classes of equations and let us prove that the conditions \eqref{2eq} are necessary and sufficient for \eqref{base2} to hold using the approach that was used in the theorem above.

\begin{enumerate}
    \item {(\bf no $A_1$)}  $I = \{0\}^n. $ 

    The first point in condition 2) coincides with holding of the corresponding equation in \ref{base2}.

     \item {\bf (consecutive $A_1$)}  $I \in \{0,1\}^n \text{ such that } \exists k : i_k=i_{k+1}=1$ or  $i_1+i_n=2$. 

    In analogy to Theorem 1, condition 1) is necessary and sufficient for the corresponding equation due to the nilpotency of $A_1$.

    \item {\bf (sparse $A_1$)} $I \in \{0,1\}^n \text{ such that } \not \exists k : i_k=i_{k+1}=1, i_1+i_n \neq 2.$

   In a manner similar to the deduction in Theorem \ref{W state representation} and using \ref{mul}, we arrive at \begin{equation}\label{sptr_i}
       Tr(A_1 A_{0 }^{p_1}A_1 A_{0}^{p_2} \ldots A_1 A_{0}^{p_l}) = \lambda^l (A_{0}^{p_1})_{kj}(A_{0}^{p_2})_{kj}\ldots (A_{0}^{p_l})_{kj}  
    \end{equation}

    It follows that the second point in condition 2) is equivalent to this class of equations being true.
\end{enumerate}

\end{proof}

This theorem opens up a simpler and more efficient way for us to construct TI MPS representations for a large class of states, which has a number of advantages over the general methods for constructing TI MPS representations.

\begin{remark}
    In the formulation of Theorem \ref{base i-sparse TI MPS construction} we can interchange $0$ and $1$ (i.e. $ A_0 \leftrightarrow A_1$  and $I=(i_1,i_2, \ldots, i_n) \leftrightarrow I'= (1-i_1,1-i_2, \ldots, 1-i_n))$. The proof of the theorem will remain the same. All results below will also hold upon such interchange.
\end{remark}

\begin{comment}
    \begin{remark}\label{Renormalize for TI MPS with matrix unit}
Note that if matrices $A_0, A_1 = \lambda E_{jk}$ and state $\psi$ satisfy the conditions of Theorem \ref{base i-sparse TI MPS construction} then from proposition \ref{lamb} the conditions of Theorem \ref{base i-sparse TI MPS construction} are also satisified by matrices $A_{0}'=\frac{1}{\lambda}A_{0}, A_1'=E_{jk}  $ and state $\lambda^{-n} \psi$. Thus, we can immediately search for TI MPS with PBC for a state $s \psi$ where $s \in \mathbb{C}$, taking $A_1 = E_{jk}$, and finding $A_{0}$, and get a representation with matrices $A_i' = s^{-\frac{1}{n}} A_i, i =1,2$ for $\psi$.
\end{remark}
\end{comment}

%В самом деле, если мы сможем найти матрицы $A_i = \lambda E_{jk}$ и $A_{0}$, которые подпадают под условие Теоремы \ref{base i-sparse TI MPS construction} для $\psi$, то те же матрицы мы могли бы получить, проверив условия Теоремы для $E_{jk}$ и $\frac{1}{\lambda} A_{0}$ и вектора $\lambda^{-1/n} \psi$, а потом перейдя к этим матрицам, домноженным на $\lambda$. Таким образом, мы сразу можем искать TI MPS с PBC c скейленной матричной единицей для ренормализованного на некий параметр вектора $s \psi$, беря в качестве $A_i = E_{jk}$, а уже после проверки получать представление с матрицей $A_i' = s^{-1/n} E_{jk}$ для вектора $\psi$

If the state $\psi$ satisfies the system of equations \ref{2eq} along with matrices $A_0$ and $A_{1}$ then we can denote $\gamma_i=(A_{0}^{i - 1})_{k,j}$ and all equations except one become equations in the $n-1$ variables $\gamma_1,\ldots, \gamma_{n-1}$. This naturally leads to the question of whether we can use lower-dimensional matrices instead of the ones guaranteed by Theorem \ref{base i-sparse TI MPS construction}. It turns out that for an arbitrary state admitting TI MPS representation with PBC with a scaled matrix unit, one can always construct an $n$-dimensional representation.

% (за исключением первого уравнения из пункта 2))

% мы могли бы получить новое представление для данного вектора с отскейленной матричной единицей по какому-то старому, подобрав более низкоразмерную матрицу, для которой тем не менее  $k, j$ элементы ее различных степеней будут такими же, как и в более высокоразмерном представлении.

\begin{theorem}[Canonical construction of the TI MPS representation with PBC with a scaled matrix unit ]\label{canonical construction with matrix unit}

 Let for $\psi \in (\mathbb{C}^2)^{\otimes n}$ there exist a TI MPS representation with PBC with a scaled matrix unit with matrices  $B_1 = \lambda E_{j, k}, B_{0}$ where $ \lambda \in \mathbb{C} / \{0\}$. Then for $\psi$ there exists a TI MPS representation with PBC with a scaled matrix unit using matrices of dimension $n \times n$ of the following form:

$$
A_{0} = \lambda \displaystyle \begin{pmatrix}
0 & \gamma_{n - 1} & \gamma_{n - 2} & \gamma_{n - 3} & \dotsc  & \gamma_1\\
0 & \omega & 1 & 0 & \ddots  & 0\\
0 & 0 & 0 & 1 & \ddots  & 0\\
\vdots  & \ddots  & \ddots  & \ddots  & \ddots & \ddots\\

\vdots  & \ddots  & \ddots  & \ddots  & 0 & 1\\
0 & 0 & \ldots & 0 & 0 & 0
\end{pmatrix},\quad A_{1} = \lambda \displaystyle \begin{pmatrix}
0 & 0 & 0 & 0 & \dotsc  & 0\\
0 & 0 & 0 & 0 & \ddots  & 0\\
0 & 0 & 0 & 0 & \ddots  & 0\\
\vdots  & \ddots  & \ddots  & \ddots  & \ddots & \ddots\\

\vdots  & \ddots  & \ddots  & \ddots  & 0 & 0\\
1 & 0 & \ldots & 0 & 0 & 0
\end{pmatrix}
$$
where $\gamma_1, \ldots, \gamma_{n - 1}, \omega \in \mathbb{C}$ satisfy $\Tr A_{0}^n = \Tr B_{0}^n = (\lambda\omega)^n$, $(A_{0}^q)_{1, n} = (B_{0}^q)_{k, j} = \lambda^q \gamma_q$ for all $q = \overline{1,n - 1}$.
\end{theorem}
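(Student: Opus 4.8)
The plan is to reduce the whole statement to Theorem~\ref{base i-sparse TI MPS construction}, which characterizes any TI MPS representation with a scaled matrix unit purely through the scalar quantities $(B_0^q)_{k,j}$ and $\Tr B_0^n$. Since $B_0, B_1 = \lambda E_{j,k}$ represent $\psi$, that theorem tells us conditions 1) and 2) hold for $\psi$; the point is that condition 1) is a property of $\psi$ alone, while condition 2) writes every coefficient as $c_{(0,\ldots,0)} = \Tr B_0^n$ and, for the admissible index tuples, $\lambda^l c_I = \prod_m (B_0^{p_m})_{k,j}$ with each $p_m$ ranging over $1,\ldots,n-1$. Hence it suffices to build $A_0$ of the stated form together with $A_1 = \lambda E_{n,1}$ (so that $j=n$, $k=1$ and the relevant entry is $(A_0^{p_m})_{1,n}$) and to check that $(A_0^q)_{1,n}$ and $\Tr A_0^n$ reproduce $(B_0^q)_{k,j}$ and $\Tr B_0^n$. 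Then the sufficiency direction of Theorem~\ref{base i-sparse TI MPS construction} forces $A_0, A_1$ to represent the same $\psi$, condition 1) being inherited automatically.

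Concretely, I would set $\gamma_q := \lambda^{-q}(B_0^q)_{k,j}$ for $q=1,\ldots,n-1$ and take $\omega$ to be any $n$-th root of $\lambda^{-n}\Tr B_0^n$, then assemble $A_0$ as specified. Two computations remain: (i) $(A_0^q)_{1,n} = \lambda^q\gamma_q$ for $1\le q\le n-1$, and (ii) $\Tr A_0^n = (\lambda\omega)^n$. For (i), writing $A_0 = \lambda M$, I would track $M^q e_n$ inductively using $M e_1 = 0$, $M e_2 = \gamma_{n-1} e_1 + \omega e_2$, and $M e_m = \gamma_{n-m+1} e_1 + e_{m-1}$ for $m\ge 3$. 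This gives $M^q e_n = \gamma_q e_1 + e_{n-q}$ for $q\le n-2$, and at the boundary $M^{n-1} e_n = \gamma_{n-1} e_1 + \omega e_2$; in every case the $e_1$-component equals $\gamma_q$, so $(A_0^q)_{1,n} = \lambda^q\gamma_q$ as required.

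For (ii) the cleanest route is spectral. Expanding $\det(M - tI)$ along the first column (which is zero apart from the $(1,1)$ entry $-t$) reduces it to the determinant of an upper-triangular $(n-1)\times(n-1)$ block with diagonal $(\omega - t, -t, \ldots, -t)$, so the characteristic polynomial is $(-t)^{n-1}(\omega - t)$. Thus $M$ has eigenvalue $0$ with multiplicity $n-1$ and eigenvalue $\omega$ once, whence $\Tr M^n = \omega^n$ and $\Tr A_0^n = (\lambda\omega)^n$, matching $\Tr B_0^n$ by the choice of $\omega$.

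With (i) and (ii) in hand, the quantities governing condition 2) of Theorem~\ref{base i-sparse TI MPS construction} coincide for $(A_0, A_1)$ and $(B_0, B_1)$, and condition 1) holds for $\psi$ regardless; therefore $A_0, A_1$ form a TI MPS representation with PBC of $\psi$. I do not expect a genuine obstacle, since this is essentially a normal-form reduction rather than a new existence argument; the only delicate point is the boundary case $q = n-1$ in computation (i), where the shift terminates and the $e_2$-component picks up $\omega$ instead of continuing, so that is the step I would write out with the most care.
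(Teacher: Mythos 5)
Your proposal is correct and follows essentially the same route as the paper: both reduce the statement to Theorem~\ref{base i-sparse TI MPS construction}, choose $\gamma_q = \lambda^{-q}(B_0^q)_{k,j}$ and $\omega$ an $n$-th root of $\lambda^{-n}\Tr B_0^n$, and then verify the two identities $(A_0^q)_{1,n} = \lambda^q\gamma_q$ for $q=\overline{1,n-1}$ and $\Tr A_0^n = (\lambda\omega)^n$. The only differences are cosmetic: you prove the trace identity spectrally (characteristic polynomial $(-t)^{n-1}(\omega-t)$) where the paper argues via matrix units that only $\omega E_{2,2}$ can contribute to the diagonal of a power, and you apply Theorem~\ref{base i-sparse TI MPS construction} directly to $\psi$ where the paper first rescales to $\psi/\lambda^n$ and back.
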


\begin{proof}

From the assumptions of the theorem, we have a matrix pair $B_1 = \lambda E_{jk}$ and $B_{0}$, which together with $\psi$ satisfy the conditions of Theorem \ref{base i-sparse TI MPS construction}. We can consider according to Remark \ref{1} a similar TI MPS representation with PBC with a scaled matrix unit for $\dfrac{1}{(\lambda)^n} \psi$ using the matrices $C_1 = \dfrac{1}{\lambda} B_1 = E_{j,k}$ and $C_0 = \dfrac{1}{\lambda} B_{0}$. It suffices for us to prove that the matrices $\dfrac{1}{\lambda}A_1 = E_{n,1}$ and $\dfrac{1}{\lambda} A_{0}$ for some $\gamma_1, \ldots, \gamma_{n - 1}$ and $\omega$ determine a TI MPS representation with PBC for $\dfrac{1}{(\lambda)^n} \psi$. To prove this, we need to check that Theorem \ref{base i-sparse TI MPS construction} holds for $D_0 = \dfrac{1}{\lambda} A_{0}, D_1 = \dfrac{1}{\lambda}A_1$ and $\psi' = \dfrac{1}{(\lambda)^n} \psi $. Additional properties of the relation between the matrices $A_{0}$ and $B_{0}$ will be proved in a related way in the process.

We can rewrite $D_0$ in the language of matrix units $$D_{0} = \sum\limits_{j=1}^{n -1} \gamma_{n - j} E_{1, j + 1} + \sum\limits_{j=2}^{n -1} E_{j, j + 1} + \omega E_{2,2}.$$ We can check the following conditions of Theorem \ref{base i-sparse TI MPS construction} for $D_0, D_1$ and $\psi'.$

\begin{enumerate}
    \item ($c_I^{\psi'}=0$, if in $I$ two consecutive $i$' s (upto cyclicity) appear).

     From the assumptions of the theorem, $ B_0$ and $B_1$ determine a TI MPS  representation with PBC with a scaled matrix unit for $\psi$.  Therefore, owing to  condition 1) from Theorem \ref{base i-sparse TI MPS construction} which states that $c_I^{\psi} = 0$ it follows that  {$c_I^{\psi'} = \dfrac{1}{(\lambda)^n} c_I^{\psi} = 0.$ }

    \item ($\Tr D_{0}^n = c_{(0, 0, \ldots, 0)}^{\psi'}$)

    All the terms in the matrix unit decomposition of $D_0$ other than $\omega E_{2,2}$  are scaled matrix units $E_{k,j}$ with $ k < j$. Therefore, due to $E_{rs}E_{sl} =E_{rl}$, it follows that $\Tr (D_{0}^n) = \omega^n $ since other matrix units in the decomposition cannot contribute to the diagonal. 
    Therefore, to fulfill the condition, we only need to put $w = \sqrt[n]{c^{\psi'}(0, \ldots, 0)}$. 
    
    Additionally, the condition from the theorem $\Tr A_{0}^n = \Tr B_{0}^n = (\lambda \omega)^n$ can easily be obtained by expressing $A_{0}$ as $\lambda D_0$ and $B_{0}$ as $\lambda C_{0}$.

    \item ($\prod\limits_{m=1}^l (D_{0}^{p_m})_{1,n} = c_I^{\psi'} $)

     We already have a pair of matrices $C_1$ and $C_{0}$, which together with $\psi'$ satisfies the conditions of Theorem \ref{base i-sparse TI MPS construction} (with the scaling factor of the matrix unit $C_1$ being $1$) and hence $\prod\limits_{m=1}^l (C_{0}^{p_m})_{k, j} = c^{\psi'}_I\ \text{for all }I = (1, \underbrace{0, \ldots, 0}_{p_1}, 1, \underbrace{0, \ldots, 0}_{p_2}, \ldots, 1, \underbrace{0, \ldots, 0}_{p_l})$ where $p_m > 0, l > 0$ and $\sum p_m = n - l$.

    Assume $\gamma_q = (C_{0}^{q})_{k,j}$, then to prove this condition for $D_1$ and $D_{0}$ it is sufficient to prove that $(D_{0}^{q})_{1,n} = \gamma_q$ when $q = \overline{1,n - 1}$. 

    Note that \begin{multline}\label{DD}
     D_0=\gamma_{n-(2-1)}E_{1,2}+\gamma_{n-(3-1)}E_{1,3}+\ldots  \gamma_2 E_{1,n-1}+\\ + \gamma_{n-(n-1)} E_{1,n} +E_{2,3}+E_{3,4}+\ldots + E_{n-1,n}+
     \omega E_{22}
    \end{multline}
    It is evident that 
    \begin{align}\label{D1n}
        (D_{0}^{q})_{1,n}
        &= (D_0(\sum\limits_{j=2}^{n -1} E_{j, j + 1} + \omega E_{2,2})^{q-1})_{1,n}  \nonumber \\
        &= (\gamma_{n-(n-(q-1)-1)}E_{1,n-(q-1)} E_{n-(q-1),n})_{1,n} \\
        &= \gamma_q. \nonumber 
    \end{align}
The first equality follows from the rule of matrix unit multiplication and the fact that in \eqref{DD} for all $E_{i,j}$ we have $j>1.$ The second equality holds since we only have to pick $E_{n-1,n}$ from the last bracket as we are concerned with the element $(D_{0}^{q})_{1,n},$ each multiplication starting from the right reduces the row number of the matrix unit by $1.$ For $\omega$ (the coefficient of $E_{22}$) to enter the expression \eqref{D1n}, $2 \geq n-(q-2)$ or $q \geq n$ which does not hold as we have $q = \overline{1,n - 1}.$ After $q-1$ multiplications we are left with the term $E_{n-(q-1),n}.$ 
Thus, $(A_{0}^q)_{1, n} = ((\lambda D_{0})^{q})_{1, n} = \lambda^q \gamma_q.$

\end{enumerate}

\end{proof}

Thus, for all states for which it is possible to construct a TI MPS representation with PBC with scaled matrix unit, it is possible to do so using matrices of size $n \times n$. This gives rise to additional possibilities for the construction or verification of TI MPS with PBC. Let us consider an example to illustrate this.

\begin{example}\label{Ushift}
A scaled upper-shift matrix $A_0 = \dfrac{1}{\sqrt[2n]{n}} \sum\limits_{k=1}^{n - 1} E_{k, k + 1}$ and scaled matrix unit $ A_1 = \dfrac{1}{\sqrt[2n]{n}}E_{n,1}$ determine a TI MPS representation with PBC for $W-$state of order $n$.
\end{example}

Below we provide sketches of several proofs of this fact to illustrate how the developed apparatus can be used:

1) We can use the definition of TI MPS with PBC and leverage the upper-shift matrix property of "raising the rows" of the matrix unit. This is not a very complicated proof, but would require some minimal verification nonetheless.

2) We can use Theorem \ref{base i-sparse TI MPS construction}. It boils down to considering fairly obvious properties of the powers of the upper-shift matrix making the proof almost trivial.

3) Finally, we can use the fact that this example is exactly a special case of the construction in Theorem \ref{canonical construction with matrix unit} with coefficients that are directly obtained from the representation from Theorem \ref{W state representation}. Theorem \ref{canonical construction with matrix unit} provides a less efficient representation than Theorem \ref{W state representation}, which however does not prevent one from using this approach here in purely formal terms and getting the results immediately.

\begin{remark}\label{sense}
 Theorem \ref{canonical construction with matrix unit} guarantees that a representation can be constructed with matrices of dimension $n$ (which implies that $d(\psi) \le n$)  for those states $\psi$ that admit TI MPS representation with PBC with a scaled matrix unit. Thus, it allows to construct a more efficient representation if any representation with a scaled matrix unit has been obtained. Finally, we can look for a representation exactly in the canonical construction using Theorem \ref{canonical construction with matrix unit}. This potentially allows us to search for a solution in a class of matrices of a fairly simple form with a reduced number of degrees of freedom. This especially makes sense if we know that a representation with a matrix unit exists for a given state.

\end{remark}

Let us put together the differences between the usual TI MPS with PBC search and the search for representations using matrix units.

\begin{enumerate}
    \item The resulting system of equations is significantly simpler. Due to the nilpotency of $A_i$, a large number of equations automatically become true for TI states for which 1) holds in Theorem \ref{base i-sparse TI MPS construction}. The remaining equations include only a single element of the matrix $A_{0}$ raised to various powers, the maximum of which is $n.$ In other words, everything depends on $(A_{0})_{k,j}, (A_{0}^2)_{k,j}, \ldots, (A_{0}^{n - 1})_{k,j}$ (except for an equation for the trace of the $n$-th power of the matrix $A_{0}$). Thus, we get rid of the need to consider the traces of different products of these matrices and pass to a simpler system of equations.
    \item All equations in the system (except one) when using the Theorem \ref{base i-sparse TI MPS construction} depend on the number of all possible $I  = (1, \underbrace{0, \ldots, 0}_{p_1}, 1, \underbrace{0, \ldots, 0}_{p_2}, \ldots, 1, \underbrace{0, \ldots, 0}_{p_l})$. After discarding identical and incompatible ones and using the fact that any permutation of $(p_1,p_2,\ldots, p_l)$ does not change the LHS of the equation:
    \begin{align}
        \prod\limits_{m=1}^l (A_{0}^{p_m})_{k,j} = \lambda^l c_I,\
    \end{align}
      
   we can estimate the number of equations from above as the number of partitions of $n = \sum_{m=1}^l p_l + l$. This number is asymptotically bounded from above as $O(\frac{e^{c_0\sqrt{n}}}{n})$ where $c_0 = \pi \sqrt{\frac{2}{3}}$ by the Hardy-Ramanujan Asymptotic Partition Formula \cite{HardyAsymptoticFI}. This is significantly lower than the number of different equations in \ref{base TI formulation}, which are $O(\frac{1}{n}\sum _{p|n}\varphi (p)2^{n/p})$ by \ref{numb}. This becomes clearer when comparing the asymptotics of the logarithms of the number of equations; in the case of the classical formulation, it grows as $O(n)$, while in the case of Theorem \ref{base i-sparse TI MPS construction} it is bounded from above by $O(\sqrt{n})$.
    Furthermore, we can reduce the dimensionality of matrices in a representation to $n$, if the current representation has a bond dimension greater than $n$, using Theorem \ref{canonical construction with matrix unit}. In fact, those states for which such an approach is applicable are guaranteed to allow the construction of representations with a small number of variables and smaller matrix dimensions compared with those guaranteed in the general case, which improves the quality of these representations and simplifies the search.
    \item In itself, the use of matrix units is convenient in that the matrix has only one non-zero element, which potentially facilitates its storage in memory. With another matrix that is sufficiently sparse to pair with it, we can get MPS, which can be stored efficiently in terms of memory. For any state allowing such a representation, there exists a representation whose matrices contain $O(n)$ non-zero elements, according to Theorem \ref{canonical construction with matrix unit}, which is way lesser than the potential maximum $O(d^2)$.
\end{enumerate}

\begin{table}[h!]
\begin{center}
\footnotesize
\begin{tabular}{ |p{5cm}|p{5cm}|p{5cm}| } 
\hline
&TI MPS with PBC in general form
by definition & TI MPS with PBC
with scaled matrix unit by Theorem \ref{base i-sparse TI MPS construction}\\
\hline
Searching for a representation & Solving a large system of equations on the traces of the product of matrices & Solving a system of elementary polynomial equations for a certain position in the matrix in different degrees\\
\hline
For which states can be constructed & All & A certain class of ''sparse'' states in terms of basis vectors with nonzero coefficients\\
\hline
The dimensionality of the best known general construction for all states that admit such a representation& $O(n 2^{\frac{n}{2}})$ & $O(n)$\\
\hline
Logarithm of the number of equations to find TI MPS & $O(n)$ & $O(\sqrt{n})$\\
\hline
Number of representation matrix elements to store & $2 d^2$ & $d^2 + 1$, by canonical construction can be reduced to $O(n)$\\
\hline

\hline
\end{tabular}
\end{center}
\caption{Table comparing the straightforward construction of TI MPS with PBC in the general case and TI MPS with PBC with scaled matrix unit}
\end{table}

The set of TI states that admit a TI MPS representation with PBC with a scaled matrix unit as one of the matrices forms a subset of all TI states. This subset is not empty (e.g., by Theorem \ref{W state representation}). But not all TI states have such a representation. For example, we can consider $\psi$ with $c_I^{\psi} = 1 \ \forall I$. It can be observed that $\psi$ is a $TI$ state, but at the same time does not satisfy condition 1) in Theorem \ref{base i-sparse TI MPS construction}, and therefore does not admit a TI MPS representation with PBC with a scaled matrix unit.

Those TI states $\psi$, which allow TI MPS representation with scaled matrix unit, must satisfy condition 1) in Theorem \ref{base i-sparse TI MPS construction}, that is, those basis vectors $\ket{i_1i_2\ldots i_n}$ with non-zero coefficients must be "sparse". In this case the TI state itself does not have to be highly sparse in the usual sense of the word, although according to the above, a large count of coordinates are nullified. From the point of view of certain heuristic considerations (in particular, the natural possibility to nullify many coordinates) this way of obtaining TI MPS representations should suit well for those states which are sparse in the usual sense of the word (for example, $W$-state for which such a construction works well since it has only $n$ nonzero coordinates from $2^n$ possible).

%Те состояния $\psi$, которые допускают представление TI MPS с скейленной матричной единицей, должны удовлетворять условию 1) в теореме \ref{base i-sparse TI MPS construction}. Это можно объяснит как то, что $\psi$ должны быть "разреженными" в смысле того, что у большинства базисных векторов коэффициенты должны зануляться, так и то, что те базисные вектора $\ket{i_1i_2\ldots i_n}$, у которых коэффициенты $c(i_1i_2\ldots i_n)$$ не зануляются обязательным образом, должны быть "разреженными" с точки зрения нахождения соседних равных $0$ или $1$ (в зависимости от выбора $i$ для $A_i$) в нотации Дирака. Таким образом, такой способ получения TI MPS-представлений должен хорошо подходить для тех состояний, которые являются разреженными в обычном смысле этого слова (например, $W$-состояние, для которого такая конструкция хорошо работает, имеет только $n $ ненулевых координат из $2^n$ возможных)

This leads to the following open problem:

\begin{problem}
Does there exist an explicit description of the set of all TI states $ \psi \in (\mathbb{C}^2)^{\otimes n} $, which have TI MPS representation with PBC  in which one of the matrices is a scaled matrix unit $\lambda E_{i, j},\ i \ne j$ (which is equivalent to satisfying the conditions of Theorem \ref{base i-sparse TI MPS construction})?
\end{problem}

 We can describe this set as precisely those TI states $ \psi \in (\mathbb{C}^2)^{\otimes n} $ for which matrix $A$ is used to satisfy the conditions of Theorem \ref{base i-sparse TI MPS construction}. However, this description is not free from the search for a certain matrix $A$, so a simpler description in an explicit form of this set is preferable.

\section{Algorithmic finding of the optimal bond dimension in the general case}

The methods described above for finding a TI MPS representation with PBC for the $W$-state or more generally for a certain class of TI states do not work for all states. Furthermore, it is not guaranteed to be optimal for $d(\psi)$ and, as far as is known to the authors, methods to construct optimal TI MPS representations with PBC deterministically do not exist at the moment. We build a simple method to construct, by some asymptotics, the optimal $d(\psi)$.

To this end, we will need Hilbert's nullstellensatz \cite{Hilbert1893, Vinberg}. We provide its weak form below:

 {\bf Weak Nullstellensatz}. 
 
 {\it The ideal $ I\subset k[X_{1},\ldots ,X_{n}]$ contains $1$ if and only if the polynomials in $I$ do not have any common zeros in $K^n$, where $K$ is an algebraically closed field extension of $k$.}

 This theorem establishes that a system of polynomial equations has a solution if and only if the basis of the ideal generated by it contains $1$, which makes it possible to deterministically check the existence of solutions for systems of polynomial equations using the construction of Gröbner bases\cite{Groebner}.

 Since the system of equations \ref{base TI formulation} that the matrices $A_0$ and $A_1$ must satisfy in order to be TI-MPS with PBC for a given state $\psi$ is a polynomial system of equations in the elements of these matrices, we can formulate the following theorem.
 
 \begin{theorem}\label{bb}
 
For a TI state $\psi \in (\mathbb{C}^2)^{\otimes n}$ one can deterministically find $d(\psi)$ with time complexity $O(n^{2^{(Mn^2 2^ n)(1 + o(1)))}})$, where $M$ is some constant, according to the following algorithm:

Starting from $d=1$, we iteratively search for the Gröbner basis for the system \ref{base TI formulation} with respect to the matrices $A_0$ and $A_1$ with arbitrary elements, that is, with $2 d^2$ unknowns.

If this basis contains $1$, we move to $d + 1$.

In the case of absence of $1$ in the basis, $d(\psi) = d$.

\end{theorem}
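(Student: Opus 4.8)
The plan is to split the argument into a \emph{correctness} part and a \emph{complexity} part. For correctness, the starting observation is that for each fixed bond dimension $d$ the conditions \ref{base TI formulation} defining a TI MPS representation with PBC form a finite system of polynomial equations $\Tr(A_{i_1}\cdots A_{i_n}) - c_I = 0$, indexed by $I\in\{0,1\}^n$, in the $2d^2$ entries of $A_0$ and $A_1$, each of total degree $n$. A representation of bond dimension $d$ exists precisely when this system has a common zero over $\mathbb{C}$. Applying the weak Nullstellensatz with $k$ the subfield generated by the coefficients $c_I$ and $K=\mathbb{C}$, the system is \emph{unsolvable} over $\mathbb{C}$ if and only if the ideal $I_d$ generated by these polynomials contains $1$; and a reduced Gröbner basis of $I_d$ equals $\{1\}$ exactly in that case. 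Hence computing such a basis and inspecting it for $1$ is a decision procedure for the existence of a bond-dimension-$d$ representation.

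First I would establish that the loop halts and returns the correct value. Since every TI state admits \emph{some} TI MPS representation with PBC by \cite{Gar+06}, there is a finite $d$ for which $I_d$ does not contain $1$, so the loop terminates; and by construction it returns the \emph{smallest} such $d$, which is by definition $d(\psi)$. I would also note that padding $A_0,A_1$ with a zero row and column preserves all the traces in \ref{base TI formulation}, so existence at $d$ forces existence at every larger bond dimension; this monotonicity is not needed for correctness but confirms that ``first $d$ with a solution'' coincides with the definition of $d(\psi)$.

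For the complexity part I would bound the two contributing factors. The number of iterations is at most $d(\psi)$, and from \cite{Gar+06} we have $d(\psi)=O(n\,2^{n/2})$, so in the worst iteration the number of variables is $N=2d^2=O(n^2 2^{n})$ while the generators have degree $D=n$. The dominant cost per iteration is the Gröbner basis computation, whose worst-case running time is doubly exponential in $N$: the known degree bounds (of Dubé type, $2(D^2/2+D)^{2^{N-1}}$) give a time estimate of the shape $D^{2^{O(N)}}$. Substituting $D=n$ and $N=O(n^2 2^n)$ yields $n^{2^{O(n^2 2^n)}}$, and absorbing the polynomial overhead of the outer loop into the exponent's constant produces the stated $O\!\left(n^{2^{(M n^2 2^n)(1+o(1))}}\right)$.

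The main obstacle is the complexity bookkeeping rather than the correctness, which follows almost immediately from the weak Nullstellensatz once \ref{base TI formulation} is read as a polynomial system. The delicate points I would treat carefully are: (i) ensuring the coefficients $c_I$ lie in a field over which Gröbner bases are effectively computable, for which I would assume $c_I\in\overline{\mathbb{Q}}$ (as holds for the $W$-state, whose coefficients are $0$ and $1/\sqrt{n}$), and observe that solvability over $\overline{\mathbb{Q}}$ and over $\mathbb{C}$ coincide since both are algebraically closed; and (ii) feeding the worst-case value $d(\psi)=O(n\,2^{n/2})$ correctly into the doubly-exponential Gröbner estimate, so that the final exponent is governed by $N=2d^2$ and not by a smaller quantity. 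Everything else is routine.
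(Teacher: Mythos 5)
Your proposal is correct and follows essentially the same route as the paper's proof: the weak Nullstellensatz converts existence of a bond-dimension-$d$ representation into checking whether the Gr\"obner basis of the ideal generated by the polynomial system \ref{base TI formulation} contains $1$, the bound $d(\psi)=O(n2^{n/2})$ from \cite{Gar+06} caps the number of iterations, and the doubly-exponential Gr\"obner complexity in $2d^2$ variables with degree-$n$ generators gives the stated estimate. Your two additional remarks --- the zero-padding monotonicity and the caveat that the coefficients $c_I$ should lie in an effectively computable algebraically closed field --- are sound refinements that the paper leaves implicit, but they do not alter the argument.
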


\begin{proof}
    We know that $d(\psi) \in O(n2^{\frac{n}{2}})$ (which follows from Theorem 1 and Theorem 3 from \cite{Gar+06}).

According to the algorithm described in the theorem, we certainly reach $d$ for which there is a TI MPS with PBC representation with matrices of this dimension, which would require no more than $O(n2^{\frac{n}{2}})$ updates of $d$ (since $d(\psi) \in O(n2^{\frac{n}{2}})$). It follows from Weak Nullstellensatz that the algorithm is correct, that is, that the first time the Gröbner basis does not contain $1$, we will have non-trivial solutions, that is, matrices $A_0$ and $A_1$ such that they will determine a TI MPS representation with PBC. The transition of the algorithm from the minimum possible $d$ to the next minimal unchecked $d + 1$ guarantees that the first $d$ found will be optimal.

Let us estimate the asymptotics of the algorithm. The complexity of the Buchberger algorithm to find the Gröbner basis can be estimated as $O(k^{2^{m + o(m)}})$ with a maximum degree of polynomials $k$ and the number of variables $m$\cite{GroebnerBound}. In our case, at all steps $k=n$ (which follows from the fact that in the product of $n$ matrices all elements are polynomials of degree $n$  in matrix elements), and $m=2d^2$. Thus, the time complexity of each step is $O(n^{2^{2d^2 + o(d)}})$. In the worst case, the complexity of all steps is $O(\sum\limits_{d=1}^{C n2^{\frac{n}{2}}} n^{2^{2d^2 + o(d)} })$, where $C$ is a constant, which can be roughly estimated from above as $O(C n2^{\frac{n}{2}} n^{2^{2(C n2^{\frac{n} {2}})^2 + o(d)}})$, which can be estimated as $O(n^{2^{(Mn^2 2^n)(1 + o(1)))}})$ , where $M$ is a constant.
\end{proof}

As we can see, the estimates of the asymptotics of the time to find $d(\psi)$ are very large, which tells us both about the potential need to improve this estimate and the possible search for more efficient algorithms. In this regard, the following question can be formulated.

\begin{problem}
Are there more efficient algorithms in terms of the complexity of obtaining $d(\psi)$ than the one presented in Theorem \ref{bb}?
\end{problem}

A potential improvement in the algorithm running time estimate from Theorem \ref{bb}  can occur both based on the fact that the initial estimate was obtained from very general estimates for intermediate parts of the algorithm, and based on the special structure of the system of equations obtained from conditions which allow for TI MPS representations with PBC. This may also allow us to potentially modify the original algorithm. Nevertheless, it is also possible to use approaches that do not use the Nullstellensatz.

Despite the fact that we have built an algorithm for obtaining $d(\psi)$, we still do not know the algorithm for finding the optimal TI MPS with PBC, that is, TI MPS with PBC, in which the matrix dimensions are equal to $d(\psi)$ .

\begin{problem}
Is there an algorithm for constructing a TI MPS representation with PBC, in which the dimensions of the matrices are optimal (i.e. equal to $d(\psi)$)  for a given TI state $\psi \in (\mathbb{C}^2)^{\otimes n}$ ?
\end{problem}

Despite the fact that we have built an algorithm for finding $d(\psi)$ for an arbitrary TI state, there are currently no known results on the asymptotic behavior of $d(n)$ for a $W$ state. The current results\cite{CiracMPSReview} are such that $d(n)$ can be estimated below as $\Omega(n^\frac{1}{3 + \delta})$ for any $\delta > 0$, which was proved in \cite{ShitovMPS}. At the same time, the conjecture that $d(n) \in \Omega(n^\frac{1}{3})$\cite{Gar+06} remains unproved. The current best estimate from above known to the authors is the result of Theorem \ref{W state representation}, that is, $d(n) \in O(n)$ with constant factor $\dfrac{1}{2}$.

The approach from Theorem \ref{bb} allows us to search for $d(\psi)$ for any state. We present a table with the results of computing this number for the $W$-state for small $n$.

We used Theorem \ref{bb} along with the fact that there already exists a representation for the W state with bond dimension $\floor*{\frac{n}{2}}+1$. In this way, we could have started with $d=1$ and continue till the bond dimension in our computation. We carried out the Groebner basis computation using \textit{Wolfram Mathematica}\cite{wolfram}.

It turns out that for all cases except when $n=6$ the obtained values of $d(n)$ coincides with dimension of the representation obtained in Theorem \ref{W state representation}. The case where $n=6$ the dimension of the optimal representation is either three or four. This is related with computational difficulties in evaluation of Groebner basis for $n=6$ and $d=3$. The results obtained hint that the representation obtained in Theorem \ref{W state representation} can be optimal asymptotically and in terms of the constant factor. Moreover, it may be optimal even in terms of reaching the lower bound $d(n).$

\begin{table}[h!]
\begin{center}
\begin{tabular}{ |c|c| } 
\hline
$n$ & $d(n)$ \\
\hline
2 & 2 \\
3 & 2 \\
4 & 3 \\
5 & 3 \\
6 & 3 - 4 \\
7 & 4 \\
8 & 5 \\
9 & 5 \\
\hline
\end{tabular}
\end{center}
\caption{Table with $d(n)$ for $W$-state for small $n$.}
\end{table}

The above facts lead to the following problems.

\begin{problem}
Is the estimate obtained in Theorem \ref{W state representation} optimal? Or, in other words, is it true that $d(n)=\floor*{\frac{n}{2}}+1$ for the $W$-state? In particular, is it true that $d(6)=4?$ 
\end{problem}

\begin{problem}
Are the estimates obtained in Theorem \ref{W state representation} asymptotically optimal? Or, in other words, is it possible to obtain better estimates for $d(n)$ than $O(n)$ with a constant factor $\dfrac{1}{2}$?
\end{problem}

Also, when optimizing calculations, $d(n)$ for the $W$-state for a number of small values of $n$ can be calculated, which can be useful in practice.

It can be noted that when calculating specific $d(n)$ for the $W$-state, we see that $d(n) \le d(n + 1).$ Wich leads us to the following problem.

\begin{problem}
Is $d(n)$ for a $W$-state monotonically non-decreasing?
\end{problem}

A similar consideration may help to construct more efficient ways to calculate $d(n)$.

\section{Conclusion}

In this work, we explore the theory of Matrix Product States (MPS), with a particular focus on the construction of translationally invariant MPS representations with periodic boundary conditions (PBC) for given states. We primarily focus on determining the optimal bond dimension $d(\psi)$ for a given MPS representation.

Starting with an analysis of the $W$ state, we develop a representation using the MPS formalism and discuss its implications, particularly with respect to the relationship between the bond dimension and the size of the state. This has repercussions for computational efficiency.

Subsequently, we generalize our approach to examine arbitrary states. By appealing to the Weak Nullstellensatz and the concept of Gröbner bases, we develop a deterministic algorithm to find the optimal bond dimension for arbitrary states. Although the time complexity of this algorithm, as it stands, is rather prohibitive, it provides a starting point for future refinements and enhancements.

Despite the successful development of the algorithm, we have also highlighted that there is currently no known method for constructing an optimal TI MPS representation with PBC, a promising area for future research. In parallel, exploring the possibility of refining the complexity estimate of the algorithm, which remains an open question, is equally interesting.

Our investigations further hint that the representation developed for the $W$ state may be optimal, at least for the range of small values of $n$ we have explored. The potential optimality of the representation and its implications are yet to be fully realized and call for further exploration.

Our analysis of specific $d(n)$ for the $W$ state suggests that $d(n)$ is a non-decreasing function, a property that, if generalizable, could offer additional tools for constructing more efficient algorithms to calculate $d(n)$.

The research presented in this work opens up intriguing possibilities for future research, particularly in algorithmic optimization, quantum state representation, and the use of algebraic methods in quantum information theory. While several important questions remain open, we believe that the concepts, methods, and findings discussed here provide a stepping-stone towards a deeper understanding and more effective utilization of Matrix Product States in quantum computation and quantum information theory.

\section*{References}
\bibliography{refs.bib}

\bibliographystyle{unsrt}
\end{document}